\documentclass[journal]{IEEEtran}
\usepackage{amsmath,amssymb,amsthm}    
\usepackage[dvips]{graphicx}   
\usepackage{verbatim}   
\usepackage{color}      
\usepackage{subfigure}  
\usepackage{epsfig}
\usepackage{float}
\usepackage{algorithm}
\usepackage{algorithmic}
\usepackage{setspace}
\usepackage{subfigure}
\usepackage{cite}
\newtheorem{lemma}{\bf Lemma}
\newtheorem{theorem}{\bf Theorem}

\newtheorem{proposition}{\bf Proposition}



\makeatletter
\newcommand*{\rom}[1]{\expandafter\@slowromancap\romannumeral #1@}
\makeatother
\begin{document}
\title{\textbf{Measurement Matrix Design for Compressive Detection with Secrecy Guarantees}}
\author{Bhavya~Kailkhura,~\IEEEmembership{Student Member,~IEEE}, Sijia~Liu,~\IEEEmembership{Student Member,~IEEE},
Thakshila~Wimalajeewa,~\IEEEmembership{Member,~IEEE}, Pramod~K.~Varshney,~\IEEEmembership{Fellow,~IEEE}
\thanks{This work was supported in part by the National Science Foundation (NSF) under Grant No. 1307775.}
\thanks{Authors are with Department of EECS, Syracuse University, Syracuse, NY 13244. (email: bkailkhu@syr.edu; sliu17@syr.edu; twwewelw@syr.edu; varshney@syr.edu)}}
\date{}
\maketitle
\begin{abstract} 
In this letter, we consider the problem of detecting a high
dimensional signal based on compressed
measurements with physical layer secrecy guarantees. 
We assume that the network operates
in the presence of an eavesdropper who intends to discover
the state of the nature being monitored by the system.
We design measurement matrices which maximize the detection performance of the network while guaranteeing a certain level of secrecy.
We solve the measurement matrix design problem under three different scenarios: $a)$ signal is known, $b)$ signal lies in a low dimensional subspace, and $c)$ signal is sparse. 
It is shown that the security performance of the system can be improved by using optimized measurement matrices
along with artificial noise injection based techniques.
 
\end{abstract}
\begin{keywords}
Compressive detection, physical layer secrecy, distributed processing, eavesdropper
\end{keywords}

\section{Introduction}
The theory of compressive sensing (CS)  mostly deals with reconstructing  a sparse signal based on a small number of measurements obtained via low dimensional projections~\cite{donoho,candes, cssurvey}. The use of CS based measurement schemes in solving inference problems, such as, detection, classification and estimation has also attracted a considerable attention in the recent literature~\cite{dave,haupt,thak,durate}. 
The random measurement scheme used in~\cite{dave,haupt,thak,durate} provides universality for a wide variety of signal classes, but it fails to exploit the signal structure that may be known \textit{a priori}. To improve performance, optimization of the measurement scheme can be performed by exploiting the signal structure. Measurement matrix design in the context of sparse signal recovery from compressed measurements has been considered in the past~\cite{elad, abol, gang, endra, rodrigues}. In~\cite{Zahedi}, the authors considered the measurement matrix deign problem in the context of sparse signal detection based on compressed measurements. In all of these works~\cite{Zahedi, elad, abol, gang, endra, rodrigues}, measurement matrices were constrained to the class of tight frames in order to avoid coloring the noise covariance matrix. In~\cite{subspace}, the measurement matrix was designed for the detection of a known sparse signal from compressive measurements. In the same context, heuristic or algorithmic approaches were proposed by~\cite{bai} for measurement matrix design.

In this letter, our goal is to design measurement matrices for compressive detection so that a desired performance level is achieved under physical layer secrecy constraints. To that end, we employ the collaborative compressive detection (CCD) framework proposed in our previous paper~\cite{bhavyaphy}. In a CCD framework, the fusion center (FC) receives compressed observation vectors from the nodes and makes the global decision about the presence of the signal vector. The transmissions by the nodes, however, may be observed by an eavesdropper. The secrecy of a detection system against eavesdropping attacks is of utmost importance in many applications~\cite{physecdet}.
Recently, a few attempts have been made to address the problem of eavesdropping threats on distributed detection networks~\cite{bkeve}.  
Security issues with CS based measurement schemes have been considered in~\cite{cssec1,cssec2,cssec3}, where performance limits of secrecy of CS based measurement schemes were analyzed (under different assumptions). In this work, we investigate the problem from a design perspective and consider the problem of measurement matrix design with secrecy guarantees in an optimization framework. 
We show that the performance of the CCD framework can be significantly improved by using optimized measurement matrices (which exploit the underlying signal structure) along with artificial noise injection based techniques.  
More specifically, we design optimal measurement matrices which maximize the detection performance of the network while guaranteeing a certain level of secrecy considering three different scenarios: $1)$ signal of interest $s$ is known, $2)$ $s$ lies in low dimensional subspace, and $3)$ $s$ is sparse.

\section{Observation Model}
\label{sec2}
\begin{figure}[t!]
  \centering
    \includegraphics[height=0.2\textheight, width=0.35\textwidth]{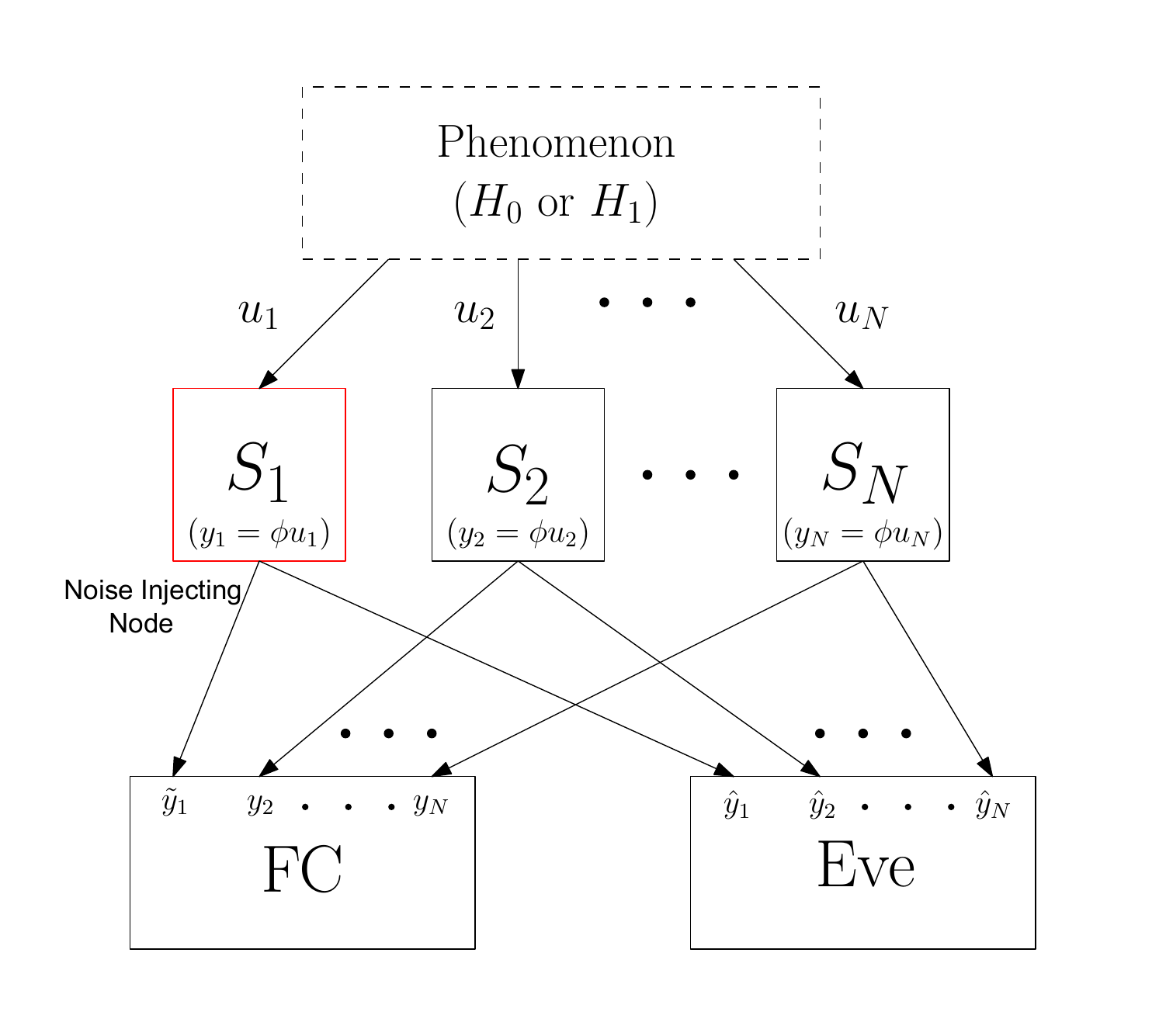}
    \vspace{-0.2in}
    \caption{System Model}
    \label{model}
    \vspace{-0.2in}
\end{figure}
\subsection{Collaborative Compressive Detection}
Consider two hypotheses  $H_{0}$ (signal is absent) and $H_{1}$ (signal is present).
Also, consider a parallel network (see Figure~\ref{model}), comprised of a fusion center (FC)
and a set of $N$ nodes, which
faces the task of determining which of the two hypotheses is true.
At the $i$th node, the observed signal, $u_i$, can be modeled as
\vspace{-0.3in}
\begin{center}
\begin{eqnarray*}
&& H_0\; :\quad u_i=v_i\\
&& H_1\; :\quad u_i=s+v_i
\end{eqnarray*}
\end{center}
for  $i =1,\cdots,N$, where $u_i$ is the $P \times 1$ observation vector, $s$ is the signal vector to be detected, $v_i$ is the additive Gaussian noise with $v_i \sim \mathcal{N} (0,\sigma^2 I_P)$ where $I_P$ is the $P\times P$ identity matrix.
Observations at the nodes are assumed to be conditionally independent and identically distributed.
Each node carries out local compression (low dimensional projection) and sends a $M$-length compressed version $y_i$ of its $P$-length observation $u_i$ to the FC. The
collection of $M$-length $(< P)$ sampled observations is given by, $y_i=\phi u_i$, where $\phi$ is an $M \times P$ measurement matrix, which is assumed to be the same for all the nodes, and $y_i$ is the $M\times 1$ compressed observation vector (local summary statistic). Under the two hypotheses, the compressed measurements are
\vspace{-0.3in}
\begin{center}
\begin{eqnarray*}
&& H_0\; :\quad y_i=\phi v_i\\
&& H_1\; :\quad y_i=\phi s+ \phi v_i
\end{eqnarray*}
\end{center}
for $i = 1,\cdots, N$. The FC makes the global decision about the phenomenon based on the received compressed measurement vectors, $\mathbf{y}=[y_1,\cdots,y_N]$.
We also assume that there is an eavesdropper present in the network who intends to discover the state of the nature being monitored by the system. 

\subsection{Collaborative Compressive Detection in the Presence of an Eavesdropper}
To keep the data regarding the presence of the phenomenon secret from the eavesdropper, in our previous work~\cite{bhavyaphy}, we used cooperating trustworthy nodes that assist the FC by injecting noise in the signal sent to the eavesdroppers to improve the security performance of the system. It was assumed that $B$ out of $N$ nodes (or $\alpha=B/N$ fraction of the nodes) tamper their data $y_i$ and send $\tilde{y_i}$ as follows: 

\noindent Under $H_0$: 
\vspace{-0.07in}
\[ \tilde{y_{i}} =  \left\{ \begin{array}{rll}
				 \phi(v_{i}+D_{i})  & \mbox{with probability}\ P_{1}^0 \\
				 \phi(v_{i}-D_{i})  & \mbox{with probability}\ P_{2}^0 \\
			     \phi v_{i}  & \mbox{with probability}\ (1-P_1^0-P_2^0)\\
				\end{array}\right.  
\]

\noindent Under $H_1$: 
\vspace{-0.07in}
\[ \tilde{y_{i}} =  \left\{ \begin{array}{rll}
				 \phi(s+v_{i}+D_{i})  & \mbox{with probability}\ P_{1}^1 \\
				 \phi(s+v_{i}-D_{i})  & \mbox{with probability}\ P_{2}^1 \\
			     \phi(s+v_{i})  & \mbox{with probability}\ (1-P_1^1-P_2^1)\\
				\end{array}\right.  
\]

\noindent where $D_{i}=\gamma s$ is a $P \times 1$ vector with constant values. The parameter $\gamma>0$ represents the noise strength, which is zero for non noise injecting nodes. We assume that the observation model and noise injection parameters are known to both the FC and the eavesdropper. The only information unavailable at the eavesdropper is the identity of the noise injecting nodes. Thus, the eavesdropper considers each node $i$ to be injecting noise with a certain probability $\alpha$. The FC can distinguish between $y_i$ and $\tilde{y_i}$. 
Note that, the values of $(P_1^0, P_2^0)$ and $(P_1^1, P_2^1)$ are system dependent and cannot be optimized in many scenarios which limits the secrecy performance of the system. Thus, in this letter, we design optimal measurement matrix $\phi$ which can be used along with artificial noise injection based techniques to improve the security performance of the system.

\section{Problem Formulation}
We use the deflection coefficient as the detection performance metric in lieu of the probability of error of the system. Deflection coefficient reflects the output signal to noise ratio and is widely used in optimizing the performance of detection systems. The deflection coefficient at the $i$th node is defined as
\vspace{-0.1in}
\begin{eqnarray*}
D(y_i)&=& (\mu_1^i-\mu_0^i)^T (\Sigma_0^i)^{-1} (\mu_1^i-\mu_0^i) 
\end{eqnarray*}
where $\mu_j^i$ and $\Sigma_j^i$ are the mean and the covariance matrix of $y_i$ under the hypothesis $H_j$, respectively. Using these notations, the deflection coefficient at the FC can be written as 
 $D(FC)=B D(\tilde{y_i})+(N-B)D(y_i).$
Dividing both sides of the above equation by $N$, we get
 $D_{FC}=\alpha D(\tilde{y_i})+(1-\alpha)D(y_i)$
where $D_{FC}={D(FC)}/{N}$ and will be used as the performance metric.
Similarly, the deflection coefficient at the eavesdropper can be written as $D_{EV}={D(EV)}/{N}=D(\hat{y_i}).$
Notice that both $D_{FC}$ and $D_{EV}$ are functions of the measurement matrix $\phi$ and noise injection parameters $(\alpha,\gamma)$ which are under the control of the FC. This motivates us to design the optimal measurement matrix for fixed noise injection parameters $(\alpha,\gamma)$
under a physical layer secrecy constraint. The problem can be formally stated as: 
\vspace{-0.1in}
\begin{equation}
\label{opt}
\begin{aligned}
& \underset{\phi}{\text{maximize}}
& & \alpha D(\tilde{y_i})+(1-\alpha)D(y_i) \\
& \text{subject to}
& & D(\hat{y_i})\leq \tau \\
\end{aligned}
\end{equation}
where $\tau\geq 0$, is referred to as the physical layer secrecy constraint which reflects the security performance of the system.
In our previous work~\cite{bhavyaphy}, we have derived the expressions for $D_{FC}$ and $D_{EV}$ (see Proposition $1$ and Proposition $2$ in~\cite{bhavyaphy}). Using those expressions \eqref{opt} reduces to:

\begin{equation}
\label{above}
\begin{aligned}
& \underset{\phi}{\text{maximize}}
& & \frac{\alpha(1-P_b \gamma)^2}{\gamma^2 P_t+\frac{\sigma^2}{\|\hat{P} s\|_2^2}}+(1-\alpha)\frac{\|\hat{P} s\|_2^2}{\sigma^2} \\
& \text{subject to}
& & \frac{(1-\alpha P_b \gamma)^2}{\gamma^2 P_t^E+\frac{\sigma^2}{\|\hat{P} s\|_2^2}}\leq \tau \\
\end{aligned}
\end{equation}
where
$\hat{P}=\phi^T(\phi \phi^T)^{-1}\phi$, 
$P_b=(P_1^0-P_2^0)+(P_2^1-P_1^1)$\\ 
$P_t=P_1^0+P_2^0-(P_1^0-P_2^0)^2$ and \\
$P_t^E=\alpha(P_1^0+P_2^0-\alpha(P_1^0-P_2^0)^2)$  matrix.
Next, we solve~\eqref{above} under various assumptions on the signal structure (e.g., known, low dimensional or sparse).

\section{Optimal Measurement Matrix Design with Physical Layer Secrecy Guarantees}

First, we explore some properties of the deflection coefficient at the FC, $D_{FC}$, and at the eavesdropper, $D_{EV}$, which will be used to simplify the measurement matrix design problem.

\begin{proposition}
Deflection coefficient both at the FC and the Eve is a monotonically increasing function of $D_H=\frac{\|\hat{P} s\|_2^2}{\sigma^2}$.
\end{proposition}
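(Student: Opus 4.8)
The plan is to express both deflection coefficients purely in terms of the scalar $D_H=\|\hat{P}s\|_2^2/\sigma^2$ and then verify monotonicity by elementary calculus. Observe that $\sigma^2/\|\hat{P}s\|_2^2 = 1/D_H$, so substituting into the objective and the constraint of~\eqref{above} gives
\begin{equation}
\label{rewrite}
D_{FC}=\frac{\alpha(1-P_b\gamma)^2}{\gamma^2 P_t+1/D_H}+(1-\alpha)D_H,\qquad D_{EV}=\frac{(1-\alpha P_b\gamma)^2}{\gamma^2 P_t^E+1/D_H}.
\end{equation}
Each expression depends on $\phi$ only through $D_H$, so it suffices to treat them as functions of the single real variable $D_H>0$.

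The key step is to recognize that both rational terms have the canonical form $f(x)=cx/(ax+1)$ once the $1/D_H$ is cleared from the denominator: multiplying numerator and denominator by $D_H$ turns $\frac{c}{a+1/D_H}$ into $\frac{cD_H}{aD_H+1}$. Here the numerator constants $c=\alpha(1-P_b\gamma)^2$ (for the FC) and $c=(1-\alpha P_b\gamma)^2$ (for Eve) are manifestly nonnegative, being squares scaled by $\alpha\in[0,1]$, while the slope parameters are $a=\gamma^2 P_t$ and $a=\gamma^2 P_t^E$ respectively. Differentiating yields $f'(x)=c/(ax+1)^2$, which is nonnegative whenever $c\ge 0$ and the denominator stays positive; the extra additive term $(1-\alpha)D_H$ in $D_{FC}$ contributes the nonnegative slope $1-\alpha$. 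Hence both $D_{FC}$ and $D_{EV}$ are monotonically increasing in $D_H$, as claimed.

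The main obstacle is ensuring the denominators remain positive for all admissible $D_H>0$, which reduces to checking $P_t\ge 0$ and $P_t^E\ge 0$. Writing $P_t=P_1^0+P_2^0-(P_1^0-P_2^0)^2$ and using that $P_1^0,P_2^0$ are probabilities with $P_1^0+P_2^0\le 1$, one has $(P_1^0-P_2^0)^2\le(P_1^0+P_2^0)^2\le P_1^0+P_2^0$, so $P_t\ge 0$; the analogous bound, with the extra factor $\alpha\le 1$ only shrinking the subtracted term, handles $P_t^E$. With nonnegative $a$ the denominators exceed $1/D_H>0$, so no sign changes occur and the differentiation argument is valid throughout, completing the proof.
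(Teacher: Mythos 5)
Your proof is correct and follows the same route as the paper: the paper's proof simply asserts that $\frac{dD_{FC}}{dD_H}>0$ and $\frac{dD_{EV}}{dD_H}>0$, and your argument carries out exactly this differentiation, additionally supplying details the paper leaves implicit (the rewriting $\frac{c}{a+1/D_H}=\frac{cD_H}{aD_H+1}$, the derivative computation, and the verification that $P_t\geq 0$ and $P_t^E\geq 0$ so the denominators stay positive). The only minor caveat is that your argument establishes nonnegativity of the derivatives (nondecreasing), with strict positivity holding except in degenerate parameter cases such as $\alpha P_b\gamma=1$; this matches the paper's intent and suffices for how the proposition is used.
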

\begin{proof}
The proof follows from the fact that both $\frac{dD_{FC}}{dD_H}>0$ and $\frac{dD_{EV}}{dD_H}>0$.
\end{proof}

The above observation leads to the following equivalent optimal measurement matrix design problem for compressive detection:

\begin{equation}
\label{optimization}
\begin{aligned}
& \underset{\phi}{\text{maximize}}
& & \delta=\|\hat{P} s\|_2^2 \\
& \text{subject to}
& & \|\hat{P} s\|_2^2\leq \frac{\sigma^2}{\frac{(1-\alpha P_b\gamma)^2}{\tau}-\gamma^2P_t^E} \\
\end{aligned}
\end{equation}
for any arbitrary signal $s$. Note that, for the random measurement matrix $\delta_r=\|\hat{P} s\|_2^2=\frac{M}{N}\|s\|_2^2$~\cite{dave}. The factor $M/N$ can be seen as the performance loss due to compression as random measurement matrix fails to exploit the signal structure that may be known
\textit{a priori}. To improve performance, we consider the optimization of the measurement matrix by exploiting the signal structure while guaranteeing a certain level of secrecy. We show that any arbitrary secrecy constraint can be guaranteed by properly choosing the measurement matrix. 

\subsection{Known Signal Detection}
First, we consider the case where $s$ is known.

\begin{lemma}
When $s$ is known, the optimal value of the objective function of \eqref{optimization}, is given by $\delta^*=\min\left(\| s\|_2^2,\frac{\sigma^2}{\frac{(1-\alpha P_b \gamma)^2}{\tau}-\gamma^2 P_t^E}\right)$. 
\end{lemma}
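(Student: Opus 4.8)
The plan is to exploit the fact that $\hat{P}=\phi^T(\phi\phi^T)^{-1}\phi$ is nothing but the orthogonal projection matrix onto the row space of $\phi$, an $M$-dimensional subspace of $\mathbb{R}^P$ (assuming $\phi$ has full row rank $M$). First I would verify the defining properties $\hat{P}^2=\hat{P}$ and $\hat{P}^T=\hat{P}$ by direct substitution, so that $\|\hat{P}s\|_2^2=s^T\hat{P}^T\hat{P}s=s^T\hat{P}s$. Since $\hat{P}$ is an orthogonal projection, its eigenvalues lie in $\{0,1\}$ and $I_P-\hat{P}$ is positive semidefinite; this immediately yields the pointwise bound $0\le\|\hat{P}s\|_2^2\le\|s\|_2^2$, with the upper bound attained exactly when $s$ lies in the row space of $\phi$ (i.e. $\hat{P}s=s$). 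This already identifies $\|s\|_2^2$ as the largest value the objective can take over all admissible $\phi$.

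Next I would establish that the objective behaves as a free parameter over the whole interval $[0,\|s\|_2^2]$: for any target value $c$ in this interval there exists a feasible measurement matrix realizing $\|\hat{P}s\|_2^2=c$. The construction is explicit: write $s=\|s\|_2\,\hat{s}$ with $\hat{s}$ a unit vector, pick a unit vector $w$ making an angle $\theta$ with $\hat{s}$ where $\cos^2\theta=c/\|s\|_2^2$, and complete $w$ to an $M$-dimensional subspace using $M-1$ additional directions orthogonal to both $w$ and $s$. Taking $\phi$ to have rows spanning this subspace gives $\|\hat{P}s\|_2^2=(s^Tw)^2=\|s\|_2^2\cos^2\theta=c$. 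The availability of the required $M-1$ orthogonal directions is exactly where the compression assumption $M<P$ enters, and verifying this dimension count is the step I would be most careful about; it is the only genuinely non-routine part of the argument.

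Finally I would intersect this achievable range with the secrecy constraint. Writing $C=\sigma^2\big/\big(\frac{(1-\alpha P_b\gamma)^2}{\tau}-\gamma^2P_t^E\big)$ for the right-hand side of the constraint in \eqref{optimization}, the feasible set of objective values is $[0,\|s\|_2^2]\cap(-\infty,C]$, whose supremum (and maximum, by the achievability argument) is $\min(\|s\|_2^2,C)$. Two cases make this transparent: if $C\ge\|s\|_2^2$ the unconstrained optimizer with $s$ in the row space of $\phi$ is feasible and $\delta^*=\|s\|_2^2$; otherwise the constraint binds and can be met with equality by tilting the subspace so that $\|\hat{P}s\|_2^2=C$, giving $\delta^*=C$. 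Combining the two cases yields $\delta^*=\min(\|s\|_2^2,C)$, as claimed, under the standing assumptions $\tau>0$ and $\frac{(1-\alpha P_b\gamma)^2}{\tau}-\gamma^2P_t^E>0$ that keep $C$ positive and the constraint meaningful.
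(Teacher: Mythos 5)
Your proof is correct, and its core is the same as the paper's: the identity $\hat{P}^2=\hat{P}=\hat{P}^T$ making $\hat{P}$ an orthogonal projection, hence $\|\hat{P}s\|_2^2=s^T\hat{P}s\le\|s\|_2^2$. The structural difference is that the paper's proof of this lemma consists of that one inequality only; the achievability side --- without which the claimed equality $\delta^*=\min\left(\|s\|_2^2,\,C\right)$ is not fully established --- is deferred by the paper to the following lemma, where the optimal $\phi^*=U[\pi_M,0](V^*R)^{T}$ with the rotation $R$ through the angle $\theta$ satisfying $\cos^2\theta=C/\|s\|_2^2$ plays exactly the role of your tilted unit vector $w$. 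Your version is thus self-contained: the explicit construction (one direction $w$ at angle $\theta$ to $s$, completed by $M-1$ directions orthogonal to both $w$ and $s$), together with the dimension count showing such directions exist precisely because $M<P$, supplies the interpolation argument that lets you intersect the achievable range $[0,\|s\|_2^2]$ with the secrecy cap $C$ and conclude the minimum is attained. What the paper's split buys is economy --- the bound alone here, with the construction recycled as the characterization of the optimal measurement matrix --- while your approach buys completeness of the lemma as a standalone statement. Both arguments need the standing positivity assumption $\frac{(1-\alpha P_b \gamma)^2}{\tau}-\gamma^2 P_t^E>0$, which you rightly make explicit and the paper leaves implicit.
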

\begin{proof}
The proof follows from the fact that $\hat{P}$ is an orthogonal projection operator, thus, $\|\hat{P}s\|_2^2\leq\|s\|_2^2$.
\end{proof}
Let us denote the singular value decomposition of $\phi=U[\pi_M,0]V^T$ where $U$ is an $M\times M$ orthonormal matrix, $[\pi_M,0]$ is an $M\times N$ diagonal matrix and $V$ is an $N\times N$ orthonormal matrix. Now, the optimal $\phi$ which achieves $\delta^*$ is characterized in the following lemma. 

\begin{lemma}
When $s$ is known, the optimal $\phi$ which achieves $\delta^*$ in~\eqref{optimization} is given by
$\phi^*=U[\pi_M,0](V^*R)^{T}$ where $U$ and diagonal $\pi_M>0$ are totally
arbitrary,
\begin{small}
$$R=\left[ \begin{array}{ccc}
\cos \theta& \mathbf{0} & \sin \theta\\
\mathbf{0} & \mathbf{I} &\mathbf{0}\\
-\sin \theta &\mathbf{0} &\cos \theta \end{array} \right],$$ \end{small} $\theta$ is the parameter which controls the level of secrecy such that $\theta=0$ if $\| s\|_2^2\leq\frac{\sigma^2}{\frac{(1-\alpha P_b \gamma)^2}{\tau}-\gamma^2 P_t^E}$, and, $\theta=\cos^{-1}\sqrt{\frac{\sigma^2/\| s\|_2^2}{\frac{(1-\alpha P_b \gamma)^2}{\tau}-\gamma^2 P_t^E}}$, otherwise. $V^*=[v_1^*,\cdots,v_N^*]$ is any orthonormal matrix satisfying $v_i^* \perp s, \forall i > M$.
\end{lemma}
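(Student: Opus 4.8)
The plan is to prove this as an \emph{achievability} result: the previous lemma already fixes the optimal value $\delta^*=\min\!\left(\|s\|_2^2,\,C\right)$, where $C=\frac{\sigma^2}{\frac{(1-\alpha P_b\gamma)^2}{\tau}-\gamma^2 P_t^E}$ denotes the constraint bound, so it remains only to exhibit a $\phi^*$ that attains $\delta^*$ while respecting the constraint. The first step is to note that $\hat{P}=\phi^T(\phi\phi^T)^{-1}\phi$ is the orthogonal projector onto the \emph{row space} of $\phi$, an $M$-dimensional subspace of the signal space. Writing the SVD $\phi=U[\pi_M,0]V^T$, this row space is exactly $\mathrm{span}\{v_1,\dots,v_M\}$, the span of the first $M$ columns of $V$. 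Since $\hat{P}$ is insensitive to the left factor $U$ and to the positive singular values $\pi_M$, these may be chosen arbitrarily, which already accounts for that part of the statement. The whole design therefore collapses to selecting an $M$-dimensional subspace $\mathcal{S}$ and evaluating $\|\hat{P}s\|_2^2=\|P_{\mathcal{S}}\,s\|_2^2$.

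Next I would invoke the elementary identity $\|P_{\mathcal{S}}\,s\|_2^2=\|s\|_2^2\cos^2\theta$, where $\theta$ is the angle between $s$ and its projection onto $\mathcal{S}$. This reduces the task to building a subspace whose angle with $s$ realizes the required captured energy. I would construct $\mathcal{S}$ through the orthonormal factor $V^*R$: choose $V^*$ so that $v_1^*=s/\|s\|_2$ and complete it to an orthonormal basis, which forces $v_i^*\perp s$ for every $i>M$ (indeed for every $i\ge 2$) and places $s$ inside $\mathrm{span}\{v_1^*,\dots,v_M^*\}$. The Givens rotation $R$, acting only in the $(v_1^*,v_N^*)$-plane, then replaces the first basis vector by $\cos\theta\,v_1^*-\sin\theta\,v_N^*$ while leaving $v_2^*,\dots,v_M^*$ untouched, thereby tilting the direction of $s$ out of the row space by exactly $\theta$.

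The core computation is to verify, using $v_1^*=s/\|s\|_2$ together with $v_N^*\perp s$ and $v_j^*\perp s$ for $j=2,\dots,M$, that the projection of $s$ onto the rotated row space has squared norm exactly $\|s\|_2^2\cos^2\theta$. Granting this, the two cases follow by matching to $\delta^*$. When $\|s\|_2^2\le C$ the constraint is inactive, so taking $\theta=0$ (i.e.\ $R=I$) yields $\|\hat{P}s\|_2^2=\|s\|_2^2=\delta^*$ while trivially satisfying the constraint. When $\|s\|_2^2>C$ the constraint binds, and solving $\|s\|_2^2\cos^2\theta=C$ gives $\cos\theta=\sqrt{C/\|s\|_2^2}=\sqrt{\frac{\sigma^2/\|s\|_2^2}{\frac{(1-\alpha P_b\gamma)^2}{\tau}-\gamma^2 P_t^E}}$, which meets the constraint with equality and hence attains $\delta^*=C$.

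I expect the main obstacle to be the rotation bookkeeping in the binding case: one must track how $R$ acts on the columns of $V^*$ and confirm that the only energy of $s$ removed from the projection is the $\sin^2\theta$ fraction lying along $v_1^*$. This is precisely where the alignment $v_1^*\parallel s$ is essential—if $s$ carried components along $v_2^*,\dots,v_M^*$, a single Givens rotation would excise only the $v_1^*$-portion and the clean identity $\|\hat{P}s\|_2^2=\|s\|_2^2\cos^2\theta$ would fail—so I would state the construction with $v_1^*=s/\|s\|_2$ explicitly and check orthonormality of the rotated basis before concluding both feasibility and optimality.
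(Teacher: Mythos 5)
Your proposal is correct and follows essentially the same route as the paper's proof: use the SVD to reduce the design to choosing the row space of $\phi$ (noting $U$ and $\pi_M$ are immaterial since $\hat{P}$ depends only on $V$), apply the rotation $R$ to tilt that row space away from $s$, and tune $\theta$ so the captured energy matches $\delta^*=\min\left(\|s\|_2^2,\Delta\right)$. However, the point you flag as the ``main obstacle'' is not merely bookkeeping --- it identifies a genuine imprecision in the paper. Writing $\tilde{s}=(V^*)^T s$, the rotated row space captures $\cos^2\theta\,\tilde{s}_1^2+\sum_{i=2}^{M}\tilde{s}_i^2$, and this equals $\cos^2\theta\,\|s\|_2^2$ only when $\tilde{s}_i=0$ for $i=2,\dots,M$, i.e., only when $s$ is aligned with $v_1^*$. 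The paper's lemma and proof require merely $v_i^*\perp s$ for $i>M$ and assert $\|\hat{P}s\|_2^2=\cos^2\theta\,\|s\|_2^2$ for any such $V^*$, which is false for $\theta\neq 0$ unless the alignment holds; the paper only implicitly enforces it via the Gram--Schmidt construction given after the lemma, which starts with $w_1=s$ and hence $v_1=s/\|s\|_2$. Your explicit requirement $v_1^*=s/\|s\|_2$ is exactly the missing hypothesis, so your version of the argument is not only equivalent to the paper's but strictly more careful.
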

\begin{proof}
To prove the lemma, notice that
\begin{small}$$\|\hat{P} s\|_2=s^TV\left[ \begin{array}{ccc}
I_M & 0 \\
0 & 0 \end{array} \right]V^Ts=\sum_{i=1}^{M} \tilde{s}_i^2\leq \| s\|_2^2$$\end{small}
where $\tilde{s}=V^T s$. The upper bound or equality in the above equation can be achieved if and only if $\tilde{s}_i=0,\;\forall i>M$. The corresponding optimal measurement matrix for
this case is characterized by
 $\phi^{*}=U[\pi_M,0]V^T$
where the orthonormal $U$ and diagonal $\pi_M>0$ are totally
arbitrary, while $V=[v_1,\cdots,v_N]$, as seen above, has to be an orthonormal matrix
satisfying $v_i \perp s, \forall i > M$.
Now, the matrix $(VR)$ is a orthonormal matrix for any orthonormal $V$ and observe that the optimal $V^*$ as given in the lemma is also orthonormal. Thus, for optimal $\phi^*$, we have
$$\|\hat{P} s\|_2=s^TV^*R\left[ \begin{array}{ccc}
I_M & 0 \\
0 & 0 \end{array} \right](V^*R)^Ts=\cos^2\theta\|s\|_2^2.$$
Next, using the definition of $\theta$, the results in the lemma can be derived. 
\end{proof}

If we define $\mathrm{Proj}_{u}(w)=\frac{u^Tw}{u^Tu}u$ and $W=[w_1,\cdots,w_N]$ with $w_1=s$ and $w_k$ as any linearly independent set of vectors, one possible solution for $V^*$ in a closed form is:
$V^*=[v_1,\cdots,v_N]$, where $v_k=\frac{u_k}{\|u_k\|_2}$ and $u_k=w_k-\sum_{j=1}^{k-1}\mathrm{Proj}_{u_j}(w_k).$ 
Note that, without physical layer secrecy constraint (or when $\theta=0$) the optimal value of the objective function is $\|s\|_2^2$. Thus, there is no performance loss due to compression. With physical layer secrecy constraint, $\theta$ serves as a tuning parameter to guarantee a certain level of secrecy.
This approach provides the optimal measurement matrix with a secrecy guarantee for a known $s$. However, in certain practical scenarios we do not have an exact knowledge of $s$. Next, we consider the cases where $s$ is not completely known. 

\subsection{Low Dimensional Signal Detection}
In this subsection, we consider the case where $s$ is not completely known but is known to lie in a low dimensional subspace and design $\phi$ so that  the detection performance at the FC is maximized while ensuring a certain level of secrecy at the eavesdropper.
We assume that $s$ resides in a $K$-dimensional subspace where $K<N$. That is to say, $s$ can be expressed as $s=D\beta$
where $D$ is an $N\times K$ matrix, whose columns are orthonormal, and $\beta$ is the $K\times1$ signal vector. Without loss of generality, we assume that $\|\beta\|_2^2=1$. Next, we look at the following two cases: $1)$ $D$ can be designed, $2)$ $D$ is fixed and known. For both the cases, we assume that $\beta$ is deterministic but unknown and find $\phi$ which maximizes the worst case detection performance. Formally, for the case where $D$ is a design parameter, the problem can be stated as

\begin{equation}
\label{opt2}
\begin{aligned}
& \underset{\phi_{M\times N}}{\max}\;\;\underset{D_{N\times K}}{\max}\;\;\underset{\beta_{K\times 1}}{\min} 
& & \delta=\|\hat{P} D\beta\|_2^2 \\
& \text{subject to}
& & \|\beta\|_2^2= 1,\;\|\hat{P} D\beta\|_2^2\leq \Delta \\
\end{aligned}
\end{equation}
where $\Delta=\frac{\sigma^2}{\frac{(1-\alpha P_b\gamma)^2}{\tau}-\gamma^2P_t^E}$.

We state the Courant-Fischer theorem which will be used to solve the above optimization problem. 

\begin{theorem}{(Courant-Fischer\cite{horn})}
Let $A$ be a symmetric matrix with eigenvalues $\lambda_1\geq\cdots\geq\lambda_N$ and $S$ denote the any $j$-dimensional linear subspace of $\mathbb{C}^N$. Then,
$$\underset{S:\;\text{dim}(S)=j}{\max}\;\;\underset{x\in S}{\min}\;\;\dfrac{x^TAx}{x^Tx}=\lambda_j.$$
\end{theorem}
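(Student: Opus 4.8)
The plan is to establish the claimed equality by bounding the max--min quantity from both sides, using the spectral decomposition of $A$ as the single common tool. Since $A$ is symmetric, I would first invoke the spectral theorem to write $A=\sum_{i=1}^{N}\lambda_i v_i v_i^T$, where $\{v_1,\dots,v_N\}$ is an orthonormal eigenbasis with $\lambda_1\geq\cdots\geq\lambda_N$. Expanding any $x=\sum_i c_i v_i$ turns the Rayleigh quotient into $\frac{x^TAx}{x^Tx}=\frac{\sum_i \lambda_i c_i^2}{\sum_i c_i^2}$, i.e.\ a convex combination of the eigenvalues with weights $c_i^2/\|x\|_2^2$. This representation is exactly what makes both inequalities transparent, so I would set it up once and reuse it.

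For the achievability direction ($\geq\lambda_j$), I would exhibit a single good subspace rather than reason about all of them. Take $S^*=\mathrm{span}(v_1,\dots,v_j)$, which has dimension $j$. Every nonzero $x\in S^*$ involves only the eigenvalues $\lambda_1,\dots,\lambda_j$, each of which is at least $\lambda_j$, so its Rayleigh quotient is $\geq\lambda_j$. Hence the inner minimum over $S^*$ is at least $\lambda_j$, and therefore the outer maximum over all $j$-dimensional subspaces is $\geq\lambda_j$.

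For the upper bound ($\leq\lambda_j$), I would show that \emph{every} $j$-dimensional subspace $S$ must contain some vector whose Rayleigh quotient is $\leq\lambda_j$. The key device is the complementary subspace $T=\mathrm{span}(v_j,v_{j+1},\dots,v_N)$, of dimension $N-j+1$. Since $\dim S+\dim T=j+(N-j+1)=N+1>N$, the intersection $S\cap T$ must contain a nonzero vector $x$; expanding $x$ in the basis $v_j,\dots,v_N$ involves only eigenvalues $\leq\lambda_j$, so $\frac{x^TAx}{x^Tx}\leq\lambda_j$. Thus the inner minimum over any $S$ is $\leq\lambda_j$, and taking the maximum over all $S$ preserves this, giving $\leq\lambda_j$. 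Combining the two directions yields the stated equality.

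The main obstacle is the dimension-counting intersection argument in the upper bound: one must justify that a $j$-dimensional subspace $S$ and an $(N-j+1)$-dimensional subspace $T$ necessarily meet in a nonzero vector because the sum of their dimensions exceeds the ambient dimension. Everything else is direct manipulation of the Rayleigh quotient. I would also flag the mild convention point that the statement is phrased over $\mathbb{C}^N$ with $x^TAx$; for the real symmetric matrices arising throughout this paper the spectral theorem supplies real eigenvalues and a real orthonormal eigenbasis, so the argument above goes through verbatim.
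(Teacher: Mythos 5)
Your proof is correct and is the standard Courant--Fischer argument: spectral decomposition, achievability of $\lambda_j$ via $\mathrm{span}(v_1,\dots,v_j)$, and the upper bound via dimension counting against $\mathrm{span}(v_j,\dots,v_N)$. The paper itself gives no proof of this statement---it is quoted as a known result from \cite{horn}---and your argument is precisely the classical one found in that reference, so there is nothing to compare beyond noting that your flag about the $x^TAx$ versus conjugate-transpose convention over $\mathbb{C}^N$ is a fair observation about the paper's (slightly sloppy) phrasing.
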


\begin{lemma}
When $s$ lies in a low dimensional signal subspace, the 
optimal value of the objective function of \eqref{opt2} is given by $\delta^*=\min\left(\| \beta\|_2^2,\Delta\right)$ if $K\leq M$, and $\delta^*=0$, otherwise. 
\end{lemma}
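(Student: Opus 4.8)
The plan is to collapse the max--max--min in \eqref{opt2} into a single eigenvalue computation for the projector $\hat P$, and then treat the secrecy cap separately. First I would observe that because the columns of $D$ are orthonormal, $\|\hat P D\beta\|_2^2=\beta^T(D^T\hat P D)\beta$, so for fixed $\phi$ and $D$ the inner minimization over unit $\beta$ is exactly $\lambda_{\min}(D^T\hat P D)$. Substituting $x=D\beta$, this equals $\min_{x\in\mathcal D,\,\|x\|_2=1}x^T\hat P x$, where $\mathcal D$ is the $K$-dimensional column space of $D$, so maximizing over $D$ amounts to maximizing over all $K$-dimensional subspaces $\mathcal D$. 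Applying the Courant--Fischer theorem (Theorem~1) to the symmetric matrix $\hat P$ then identifies this max--min with the $K$-th largest eigenvalue $\lambda_K(\hat P)$.

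Next I would use that $\hat P=\phi^T(\phi\phi^T)^{-1}\phi$ is an orthogonal projector of rank $M$, so its spectrum is $M$ ones followed by $N-M$ zeros, independent of the particular $\phi$. Hence $\lambda_K(\hat P)=1$ when $K\le M$ and $\lambda_K(\hat P)=0$ when $K>M$. The case $K>M$ is then immediate: for every $\phi$ and $D$ the worst-case objective is already $0$ (equivalently, $\dim\mathcal D+\dim\ker\hat P=K+(N-M)>N$ forces a nonzero $x\in\mathcal D$ with $\hat P x=0$), so $\delta^*=0$, and the outer $\max$ over $\phi$ cannot help.

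For $K\le M$ I would prove two matching bounds. For the upper bound, feasibility at the minimizing $\beta$ gives $\delta\le\Delta$, while $\|\hat P D\beta\|_2^2\le\|D\beta\|_2^2=\|\beta\|_2^2$ since $\hat P$ is a projection and $D$ has orthonormal columns, so $\delta\le\min(\|\beta\|_2^2,\Delta)$ for any feasible design. For achievability, when $\Delta\ge\|\beta\|_2^2=1$ I would place $\mathcal D$ inside the range $V_\phi$ of $\hat P$ (possible since $K\le M$), so $\hat P D\beta=D\beta$ and $\|\hat P D\beta\|_2^2=1$ for every unit $\beta$, meeting the bound and trivially satisfying the cap. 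When $\Delta<1$, I would instead tilt $\mathcal D$ away from $V_\phi$ so that $D^T\hat P D=\Delta I_K$, mirroring the rotation $R$ of Lemma~2: writing $\hat P=QQ^T$ with $Q$ an $N\times M$ orthonormal basis of $V_\phi$ and $Q_\perp$ a basis of $V_\phi^\perp$, take $D=\sqrt{\Delta}\,Q\tilde A+\sqrt{1-\Delta}\,Q_\perp\tilde B$ with $\tilde A,\tilde B$ having orthonormal columns; then $\hat P D=\sqrt{\Delta}\,Q\tilde A$ yields $\|\hat P D\beta\|_2^2=\Delta$ for all unit $\beta$, so the worst-case objective equals $\Delta$ with the secrecy constraint active. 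Either way $\delta^*=\min(\|\beta\|_2^2,\Delta)$.

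The step I expect to be the real obstacle is the achievability construction in the $\Delta<1$ subcase, not the Courant--Fischer reduction: forcing $D^T\hat P D$ to be a scalar multiple of the identity while keeping the columns of $D$ orthonormal makes the perpendicular block $\tilde B$ carry $K$ orthonormal columns inside $V_\phi^\perp$, which tacitly requires $K\le N-M$ in addition to $K\le M$. I would resolve this either by invoking the standing compressive regime $M<N$ with small $K$, or by relaxing the target to only $\lambda_{\min}(D^T\hat P D)=\Delta$ (which is all the objective and the constraint at the worst-case $\beta$ actually need), so that the single hypothesis $K\le M$ suffices.
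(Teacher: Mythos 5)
Your core argument is the paper's own: its entire proof of this lemma is the Courant--Fischer identification of $\max_{D}\min_{\beta}$ with $\lambda_K(\hat P)$ plus the observation that $\hat P$ is a rank-$M$ orthogonal projector, so $\lambda_K(\hat P)=1$ for $K\le M$ and $0$ otherwise; your reduction via $\lambda_{\min}(D^T\hat P D)$ and the subspace substitution is a more explicit version of exactly that step. The difference is everything you do afterwards. The paper's proof stops at the unconstrained statement and leaves the cap $\Delta$ and the attainability of $\min\left(\|\beta\|_2^2,\Delta\right)$ to Lemma~\ref{lem4}, which achieves the cap by \emph{scaling}: $D^*=\cos\theta\,[v_1,\dots,v_K]$, so that $\|\hat P D^*\beta\|_2^2=\cos^2\theta$ --- note this $D^*$ no longer has orthonormal columns, quietly stepping outside the stated model. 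You instead keep $D$ orthonormal and \emph{tilt} the subspace so that $D^T\hat P D=\Delta I_K$, and the requirement $K\le N-M$ you flag is not an artifact of your particular construction: if $0<\Delta<1$ and $K>N-M$, any $K$-dimensional subspace meets the $M$-dimensional range of $\hat P$ nontrivially, forcing $\lambda_{\max}(D^T\hat P D)=1>\Delta$, so no orthonormal $D$ can satisfy a secrecy constraint imposed for all unit $\beta$, and the value $\Delta$ is then unattainable within the model. So your proof is correct and complete exactly where the paper's is, and the caveat you raise is a genuine gap in the paper's treatment (hidden in Lemma~\ref{lem4}) rather than a defect of your argument; your proposed fallback --- demanding only $\lambda_{\min}(D^T\hat P D)=\Delta$, which needs just one spare dimension and hence only $M<N$ --- is the right repair if the secrecy constraint is read as binding only at the worst-case $\beta$, while the paper's scaling is the repair if one is willing to relax the orthonormality of $D$.
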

\begin{proof}
Using Courant-Fischer theorem, we can show that the problem~\eqref{opt2} without a physical layer secrecy constraint is equivalent to $\underset{\phi_{M\times N}}{\max} \lambda_k(\hat{P})$.  
Now, the proof follows by observing that $\hat{P}$ is the orthogonal projection operator and its eigenvalues are given by $\lambda_i=1$ for $i=1$ to $M$ and $\lambda_i=0$ for $i=M+1$ to $N$.
\end{proof}
Next, we assume that $K\leq M$ and characterize the optimal measurement matrix $\phi^*$ and the optimal subspace $D^*$.

\begin{lemma}
\label{lem4}
When $s=D\beta$ and $K\leq M$, the optimal $(\phi^*,D^*)$ which achieves $\delta^*$ should satisfy the following condition: for any arbitrary $\phi=U[\pi_M,0]V^T$ where $V=[v_1,\cdots,v_N]$, the optimal $D^*=\cos\theta D$ with $D=[v_1,\cdots,v_K]$.
\end{lemma}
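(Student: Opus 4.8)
The plan is to reproduce the diagonalization used in the known-signal construction (Lemma 2), but now following an entire $K$-dimensional block instead of a single direction. I would first fix an arbitrary $\phi=U[\pi_M,0]V^T$ and use the identity, already established above, that $\hat{P}=V\,\mathrm{diag}(I_M,0)\,V^T$ is the orthogonal projector onto $\mathrm{span}(v_1,\dots,v_M)$. Setting $\tilde{D}=V^TD$ and partitioning $\tilde{D}=[\tilde{D}_1^T\ \ \tilde{D}_2^T]^T$ with $\tilde{D}_1$ the top $M\times K$ block, a direct computation gives $\|\hat{P}D\beta\|_2^2=\|\tilde{D}_1\beta\|_2^2=\beta^TG\beta$ where $G=\tilde{D}_1^T\tilde{D}_1$. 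This converts the inner worst case over $\beta$ into a Rayleigh-quotient problem in the $K$-dimensional variable $\beta$.

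Next I would invoke the Courant--Fischer theorem with $j=K$ (the whole parameter space) to read the objective as $\min_{\|\beta\|_2=1}\beta^TG\beta=\lambda_{\min}(G)$, and the secrecy requirement $\|\hat{P}D\beta\|_2^2\le\Delta$ (which must hold for every admissible $\beta$) as $\lambda_{\max}(G)\le\Delta$. The design problem therefore collapses to maximizing $\lambda_{\min}(G)$ subject to $\lambda_{\max}(G)\le\Delta$ together with the constraint imposed by orthonormality of $D$, namely $G=\tilde{D}_1^T\tilde{D}_1\preceq\tilde{D}^T\tilde{D}=I_K$. Since every eigenvalue of $G$ then lies in $[0,1]$, the secrecy cap forces $\lambda_{\min}(G)\le\min(1,\Delta)=\delta^*$, which recovers Lemma 3.

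The heart of the argument is to exhibit a design attaining this bound, and the key observation is that the upper bounds on $\lambda_{\min}$ and $\lambda_{\max}$ become simultaneously tight precisely when $G$ is isotropic, i.e.\ $G=\cos^2\theta\,I_K$ with $\cos^2\theta=\min(1,\Delta)$. For such $G$ the quantity $\|\hat{P}D\beta\|_2^2=\cos^2\theta$ no longer depends on $\beta$, so the worst-case objective equals $\cos^2\theta$ and the secrecy constraint is met with equality for all $\beta$ at once, which conveniently removes the ambiguity in how the per-$\beta$ constraint interacts with the $\min_\beta$ objective. I would then verify that the stated design realizes exactly this $G$: placing the signal subspace so that its projection onto $\mathrm{range}(\hat{P})$ equals $\cos\theta[v_1,\dots,v_K]$ (this is the content of $D^*=\cos\theta D$ with $D=[v_1,\dots,v_K]$) yields $\tilde{D}_1=\cos\theta[I_K;0]$, hence $G=\cos^2\theta I_K$; the degenerate case $\theta=0$ (i.e.\ $\Delta\ge1$) reproduces the unconstrained optimum $D^*=[v_1,\dots,v_K]$, with all signal energy inside $\mathrm{range}(\hat{P})$.

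I expect the converse (optimality) direction to be the main obstacle rather than the construction: one must show that no feasible pair $(\phi,D)$ can beat the isotropic $G$, i.e.\ that concentrating $\tilde{D}$ in the top block is never suboptimal and that any anisotropy in $G$ can only lower $\lambda_{\min}$ once $\lambda_{\max}$ has been capped at $\Delta$. Two further points need care: reconciling the per-$\beta$ reading of the secrecy constraint with the worst-case objective (resolved above by isotropy), and the joint maximization over both $\phi$ and $D$ --- because $V$ is free, the alignment of the optimal subspace with $[v_1,\dots,v_K]$ may be assumed without loss of generality, which is exactly what lets the conclusion be stated for an arbitrary $\phi$.
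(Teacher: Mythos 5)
Your proposal is correct, and at its core the attainment computation is the same as the paper's: aligning the signal subspace with the first $K$ columns of $V$ and scaling by $\cos\theta$ gives $\|\hat{P}D^*\beta\|_2^2=\cos^2\theta\,\|\beta\|_2^2$, which is exactly the chain of equalities the paper writes out. Where you differ is in the surrounding logic. The paper's proof is pure verification: it plugs the stated $(\phi^*,D^*)$ into the objective, obtains $\cos^2\theta\sum_{i=1}^{\min(K,M)}\beta_i^2$, and appeals to Lemma 3 (proved via Courant--Fischer, and only for the unconstrained problem) for the fact that $\delta^*=\min(\|\beta\|_2^2,\Delta)$ is indeed the optimal value. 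You instead make the whole argument self-contained through the Gram matrix $G=\tilde{D}_1^T\tilde{D}_1$: the orthonormality relaxation $G\preceq I_K$ together with the secrecy cap $\lambda_{\max}(G)\leq\Delta$ forces $\min_{\|\beta\|_2=1}\beta^TG\beta=\lambda_{\min}(G)\leq\min(1,\Delta)$, and the stated design attains this because it makes $G$ isotropic, $G=\cos^2\theta I_K$. This buys two things the paper leaves implicit: a converse that does not lean on Lemma 3, and an explicit resolution of how the per-$\beta$ secrecy constraint interacts with the worst-case objective (isotropy makes the constraint tight simultaneously for all $\beta$). Note also that your closing worry is unfounded: the ``converse direction'' you flag as the main obstacle is already dispatched by your own eigenvalue bound $\lambda_{\min}(G)\leq\min(1,\Delta)$, so nothing remains to be shown. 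One shared blemish, inherited from the lemma statement itself rather than introduced by you: $D^*=\cos\theta D$ is not orthonormal when $\theta\neq 0$, so the bound derived under $D^TD=I_K$ is being attained by a sub-orthonormal matrix ($G\preceq I_K$ still holds); this is consistent but worth stating explicitly, since the scaling by $\cos\theta$ is really a reduction of signal power used to meet the secrecy constraint.
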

\begin{proof}
Note that for optimal $(\phi^*,D^*)$, we have
\begin{footnotesize}
\begin{eqnarray*}
\|\hat{P} s\|_2&=& \beta^T (D^*)^TV^*\left[ \begin{array}{ccc}
I_M & 0 \\
0 & 0 \end{array} \right]((D^*)^TV^*)^T\beta\\
&=& \beta^T \left[ \begin{array}{ccc}
\cos\theta I_K & 0 \end{array} \right]
\left[ \begin{array}{ccc}
I_M & 0 \\
0 & 0 \end{array} \right]
\left[ \begin{array}{ccc}
\cos\theta I_K  \\
0  \end{array} \right]\beta\\
&=&(\cos\theta)^2\sum\limits_{i=1}^{\min(K,M)}(\beta_i)^2
\end{eqnarray*}
\end{footnotesize}

Observe $\underset{\beta}{\min}\sum\limits_{i=1}^{\min(K,M)}(\beta_i)^2=\| \beta\|_2^2$ if $\min(K,M)=K$ and $0$, otherwise.
Using the definition of $\theta$, $\delta^*$ can be achieved. 
\end{proof}
The above lemma can be interpreted as follows: for any fixed $\phi$, one can choose $D$ accordingly, so that the upper bound $\delta^*$ can be achieved. Next, we look at the case where $D$ is fixed and we only optimize measurement matrix $\phi$. Observe that,
$$ \underset{\phi}{\max}\;\;\underset{\beta}{\min} 
\|\hat{P} D\beta\|_2^2\leq \underset{\phi}{\max}\;\;\underset{D}{\max}\;\;\underset{\beta}{\min} 
\|\hat{P} D\beta\|_2^2=\| \beta\|_2^2.$$
For a fixed $D$, the optimal value $\delta^*$ of the problem~\eqref{opt2} serves as an upper bound. 
To simplify the problem, we introduce an $(N\times N)$ matrix $P$ to guarantee secrecy in the system. In other words, $y_i=\phi P u_i$ where $P$ is determined to guarantee physical layer secrecy. Next, we find $\phi$ for which this upper bound is achievable for a fixed $D$ and $P$ to secrecy.

\begin{lemma}
For the low dimensional signal case $y_i=\phi P s$ with $s=D\beta$ where $D=[d_1,\cdots,d_K]$ is orthonormal,
the optimal measurement matrix $(\phi^*,P^*)$, is given by $P^*=\cos\theta I_{N\times N}$ and $\phi^*=U[\pi_M,0](V^*)^T$ where the orthonormal $U$ and diagonal $\pi_M>0$ are totally arbitrary, while $V^*=[v_1,\cdots,v_N]$ is such that $v_i=d_i$ for $i=1$ to $K$ and $v_i$ for $i=K+1$ to $N$ are such that $V$ forms an orthonormal basis.
\end{lemma}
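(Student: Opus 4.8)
The plan is to verify directly that the proposed pair $(\phi^*, P^*)$ attains the upper bound $\delta^* = \min(\|\beta\|_2^2, \Delta)$ that was established for the designable-$D$ problem and recorded just before the lemma as a bound for the fixed-$D$ problem; attaining it proves optimality. First I would compute the projection operator associated with $\phi^*$. Writing $\phi^* = U[\pi_M,0](V^*)^T$ and using that $U$ is orthonormal and $(V^*)^T V^* = I_N$, a short SVD calculation gives $\phi^*(\phi^*)^T = U\pi_M^2 U^T$ and hence
$$\hat P = (\phi^*)^T\big(\phi^*(\phi^*)^T\big)^{-1}\phi^* = V^*\begin{bmatrix} I_M & 0\\ 0 & 0\end{bmatrix}(V^*)^T,$$
which is exactly the orthogonal projection onto $\mathrm{span}(v_1,\dots,v_M)$. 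The key structural point is that, since $V^*$ is built with $v_i = d_i$ for $i=1,\dots,K$ and $K\le M$, the subspace $\mathrm{span}(D)=\mathrm{span}(d_1,\dots,d_K)$ lies inside $\mathrm{span}(v_1,\dots,v_M)$, so $\hat P$ acts as the identity on every signal of the form $s=D\beta$. I would note here that this choice of $v_1,\dots,v_K$ is admissible precisely because $D$ is orthonormal and $K\le M\le N$, so the $d_i$ can be completed to an orthonormal basis of $\mathbb{R}^N$.

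Next I would substitute $P^* = \cos\theta\, I_{N\times N}$, so that $P^* s = \cos\theta\, D\beta$, and combine this with the previous step together with the orthonormality of $D$ (hence $\|D\beta\|_2^2 = \|\beta\|_2^2$) to obtain
$$\delta = \|\hat P P^* s\|_2^2 = \cos^2\theta\,\|\hat P D\beta\|_2^2 = \cos^2\theta\,\|\beta\|_2^2.$$
The essential observation is that this value does not depend on the direction of $\beta$: the projection fixes the whole fixed subspace $\mathrm{span}(D)$ uniformly, so the inner worst-case minimization over $\beta$ (subject to $\|\beta\|_2^2=1$) does not degrade the objective and simply returns $\cos^2\theta$. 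Invoking the definition of $\theta$ exactly as in the known-signal and designable-$D$ lemmas, namely $\theta=0$ when $\|\beta\|_2^2 \le \Delta$ and $\cos^2\theta = \Delta/\|\beta\|_2^2$ otherwise, then yields $\delta = \min(\|\beta\|_2^2,\Delta) = \delta^*$, which also certifies feasibility since $\delta \le \Delta$.

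Finally I would close the argument by combining this with the upper bound recorded before the lemma: for fixed $D$ the optimal value of \eqref{opt2} cannot exceed $\delta^* = \min(\|\beta\|_2^2,\Delta)$, and the construction above meets it, so $(\phi^*,P^*)$ is optimal. I expect the main obstacle to be conceptual rather than computational: one must see why a \emph{scalar} $P^* = \cos\theta\, I$ already suffices, i.e. that the single scalar $\cos\theta$ can simultaneously leave the fixed subspace $\mathrm{span}(D)$ undistorted, so that the worst-case $\beta$ is no worse than in the designable-$D$ case, and attenuate the statistic just enough to satisfy the secrecy constraint. Ruling out that some richer, non-scalar $P$ could do better reduces entirely to the already-established upper bound, so the only remaining work is the SVD bookkeeping above.
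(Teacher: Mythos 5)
Your proposal is correct and matches the paper's approach: the paper omits this proof as being ``similar to Lemma 4,'' and your argument is exactly that argument transplanted to the fixed-$D$ setting, with the scalar $P^*=\cos\theta\, I$ playing the role that the scaled subspace $D^*=\cos\theta\, D$ plays there. Your explicit computation of $\hat{P}=V^*\left[\begin{smallmatrix} I_M & 0\\ 0 & 0\end{smallmatrix}\right](V^*)^T$, the observation that it acts as the identity on $\mathrm{span}(D)$ because $v_i=d_i$ for $i\le K\le M$, and the closing appeal to the upper bound $\min(\|\beta\|_2^2,\Delta)$ recorded just before the lemma are precisely the details the paper leaves to the reader.
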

\begin{proof}
The proof is similar to Lemma~\ref{lem4}, thus, omitted.
\end{proof}
For both the cases, where $D$ can be designed and where $D$ is fixed and known, without secrecy constraint the optimal value of the objective function is $\|s\|_2^2$. Thus, there is no performance loss due to compression. With secrecy constraint, $\theta$ serves as a tuning parameter to guarantee a certain level of secrecy.

\subsection{Sparse Signal Detection}
\vspace{-0.05in}

In this section, we assume that $s$ is $K$-sparse in the standard canonical basis and $\|s\|_2^2=1$. Also,
the exact number of the nonzero entries in $s$, their locations, and their values are assumed to be unknown. We design $\phi$ which maximizes the worst case detection performance by employing a lexicographic optimization approach\footnote{We first find a set of solutions that are optimal for a $k_1$-sparse signal. Then, within this set, we find a subset of solutions that are also optimal for $(k_1+1)$-sparse signals. This approach is known as a lexicographic optimization.}. Formally, the problem is
\vspace{-0.06in}

\begin{footnotesize}
\begin{equation}
\label{opt3}
\begin{aligned}
& \underset{\phi_{M\times N}}{\max}\;\;\underset{s}{\min} 
& & \|\hat{P} s\|_2^2 \\
& \text{subject to}
& & \|s\|_2^2= 1,\;\|s\|_0= K, \\
& & &\|\hat{P} s\|_2^2\leq \Delta,\; \phi \in \mathcal{A}_{K-1}\\
\end{aligned}
\end{equation}
\end{footnotesize}

\noindent where $\mathcal{A}_{K-1}$ is the set of solutions to the above optimization problem for sparsity level $K-1$ and $\Delta$ is defined in~\eqref{opt2}.

\begin{lemma}
There is no performance loss while solving the problem~\eqref{opt3} if we restrict our solution space to be matrices on the Stiefel manifold $S_t(M,N)$, where $$S_t(M,N):=\{\phi\in\mathbb{R}^{M\times N}:\phi\phi^T=I\}.$$
\end{lemma}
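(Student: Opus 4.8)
The plan is to exploit the fact that the map $\phi \mapsto \hat{P} = \phi^T(\phi\phi^T)^{-1}\phi$ collapses the dependence of the entire problem~\eqref{opt3} onto the row space of $\phi$ alone, so that no feature of $\phi$ beyond this $M$-dimensional subspace ever enters the objective or any constraint. Since the Stiefel manifold supplies a representative for every such subspace, restricting to it cannot change the achievable objective value.

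First I would establish that $\hat{P}$ is precisely the orthogonal projector onto the row space of $\phi$ (the range of $\phi^T$). A direct computation gives $\hat{P}^T = \hat{P}$ and $\hat{P}^2 = \hat{P}$, using that $\phi\phi^T$ is symmetric and invertible whenever $\phi$ has full row rank $M$; moreover $\hat{P}\phi^T = \phi^T$, so the range of $\hat{P}$ is exactly the row space of $\phi$ and $\hat{P}$ acts as the identity there. Consequently $\hat{P}$ is completely insensitive to the particular basis (the rows of $\phi$) chosen to represent that subspace.

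Next I would argue that every projector reachable by a full-rank $\phi$ is already reached by some matrix on $S_t(M,N)$. Given any full-rank $\phi$, let $\mathcal{S}$ be its row space and pick an orthonormal basis $q_1,\dots,q_M$ of $\mathcal{S}$ (e.g. by Gram--Schmidt on the rows of $\phi$). Stacking these as rows yields $\psi\in\mathbb{R}^{M\times N}$ with $\psi\psi^T = I_M$, so $\psi\in S_t(M,N)$, and for this $\psi$ one has $\hat{P}_\psi = \psi^T\psi$, the projector onto $\mathcal{S}$, whence $\hat{P}_\psi = \hat{P}_\phi$. Therefore the set of achievable projectors, and hence the set of achievable values of the objective $\|\hat{P}s\|_2^2$ and of every quantity in the constraints, is identical whether $\phi$ ranges over all full-rank matrices or only over $S_t(M,N)$; the optimal value of~\eqref{opt3} is left unchanged, which is the asserted absence of performance loss.

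The step I expect to be the main obstacle is the recursive lexicographic constraint $\phi\in\mathcal{A}_{K-1}$, since $\mathcal{A}_{K-1}$ is itself defined as the solution set of~\eqref{opt3} at the lower sparsity level. I would close this by induction on the sparsity level: the base case is covered by the argument above, and in the inductive step the membership condition $\phi\in\mathcal{A}_{K-1}$ is, by hypothesis, expressed entirely through $\hat{P}$, hence invariant under replacing $\phi$ by its Stiefel representative $\psi$. I would also make explicit the standing assumption that $\phi$ has full row rank $M$, so that $\phi\phi^T$ is invertible and $\psi$ exists with the same $M$-dimensional row space; a rank-deficient $\phi$ would only shrink the range of $\hat{P}$ and therefore $\|\hat{P}s\|_2^2$, so it can never outperform a Stiefel matrix.
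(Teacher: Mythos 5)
Your proof is correct and takes essentially the same approach as the paper's: both rest on the single observation that problem~\eqref{opt3} depends on $\phi$ only through the projector $\hat{P}$, i.e., only through the row space of $\phi$ --- the paper phrases this via the SVD ($\|\hat{P}s\|_2^2$ is independent of $U$ and $\pi_M$, and Stiefel matrices are exactly the frames with $\pi_M=I_M$), while your Gram--Schmidt construction of a Stiefel representative with the same row space is the same fact in different clothing. Your explicit induction to propagate this invariance through the lexicographic constraint $\phi\in\mathcal{A}_{K-1}$, and your remark on the full-row-rank assumption, add rigor that the paper leaves implicit, but they do not change the underlying argument.
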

\begin{proof}
The proof follows from the observation that $\pi_M=I_M$ for frames $\phi=U[\pi_M,0]V^T$ in Stiefel manifold and the value of $\|\hat{P}s\|_2^2$ is independent of $\pi_M$ and $U$.
\end{proof}

Next, we limit our focus on Stiefel manifolds and establish an upper bound on the value of the objective function in~\eqref{opt3} for different sparsity levels. Later we find measurement matrices which can achieve this upper bound.  
\begin{lemma}
\label{lem7}
For the sparsity level $K=1$, the optimal value of the objective function of~\eqref{opt3} is $\min\left(\frac{M}{N},\Delta\right)$. For the sparsity level $K\geq 2$, an upper bound on the value of the objective function is given by $\min\left(\frac{M}{N}(1-\mu),\Delta\right)$, where $\mu=\sqrt{\frac{N-M}{M(N-1)}}$.
\end{lemma}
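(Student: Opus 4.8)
The plan is to reduce everything to a statement about eigenvalues of principal submatrices of $\hat P$ and then to a coherence (Welch) bound. By the preceding lemma I may restrict to the Stiefel manifold without loss, where $\phi\phi^T=I_M$ and hence $\hat P=\phi^T\phi$ is symmetric and idempotent, so $\|\hat P s\|_2^2=s^T\hat P s$. Writing $\phi=[\phi_1,\dots,\phi_N]$ by its columns $\phi_i\in\mathbb{R}^M$, the entries of $\hat P$ are $\hat P_{ij}=\phi_i^T\phi_j$; in particular $\hat P_{ii}=\|\phi_i\|_2^2$ and $\mathrm{tr}(\hat P)=\sum_i\|\phi_i\|_2^2=M$. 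The key reduction I would make is that for a unit $K$-sparse vector supported on a set $T$ with $|T|=K$, one has $\|\hat P s\|_2^2=s_T^T\hat P_{TT}s_T$, where $\hat P_{TT}$ is the $K\times K$ principal submatrix indexed by $T$. Minimizing over the coefficients gives $\lambda_{\min}(\hat P_{TT})$, so the inner minimization in~\eqref{opt3} becomes $\min_{|T|=K}\lambda_{\min}(\hat P_{TT})$ and the whole problem is $\max_\phi\min_{|T|=K}\lambda_{\min}(\hat P_{TT})$ subject to the secrecy cap.

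For $K=1$ this is immediate, since $\hat P_{TT}=\hat P_{jj}$ and I must evaluate $\max_\phi\min_j\hat P_{jj}$. Because the diagonal entries are nonnegative and sum to $M$, their minimum is at most $M/N$, with equality exactly when all $\hat P_{jj}=M/N$; such $\phi$ exist (any equal-norm Parseval frame on the Stiefel manifold, e.g. a harmonic frame), so the unconstrained optimum equals $M/N$ and is attained. As in the earlier lemmas, the secrecy constraint $\|\hat P s\|_2^2\le\Delta$ only caps the attainable value, giving $\min(M/N,\Delta)$.

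For $K\ge2$ I would exploit the lexicographic nesting $\mathcal{A}_{K-1}\subseteq\mathcal{A}_1$: every admissible $\phi$ already satisfies $\hat P_{ii}=M/N$ for all $i$. To bound $\lambda_{\min}(\hat P_{TT})$ from above I would use a two-sparse Rayleigh quotient: for any pair $\{i,j\}\subseteq T$, the unit test vector with entries $1/\sqrt2$ and $-\mathrm{sgn}(\hat P_{ij})/\sqrt2$ on $i,j$ (and zero elsewhere) yields $\lambda_{\min}(\hat P_{TT})\le\tfrac12\hat P_{ii}+\tfrac12\hat P_{jj}-|\hat P_{ij}|=\frac{M}{N}-|\hat P_{ij}|$. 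Choosing $T$ to contain the pair maximizing $|\hat P_{ij}|$ (always possible when $K\ge2$) gives $\min_{|T|=K}\lambda_{\min}(\hat P_{TT})\le\frac{M}{N}-\max_{i\ne j}|\hat P_{ij}|$, so the remaining task is to lower-bound $\max_{i\ne j}|\hat P_{ij}|$ over all admissible $\phi$.

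The crux, and the step I expect to be the main obstacle, is this last coherence inequality, which is a Welch-bound computation. Normalizing the equal-norm columns to unit vectors $\tilde\phi_i=\phi_i/\sqrt{M/N}$ in $\mathbb{R}^M$ turns $\hat P_{ij}$ into $\frac{M}{N}\langle\tilde\phi_i,\tilde\phi_j\rangle$, and the Welch bound for $N$ unit vectors in an $M$-dimensional space gives $\max_{i\ne j}|\langle\tilde\phi_i,\tilde\phi_j\rangle|\ge\sqrt{\frac{N-M}{M(N-1)}}=\mu$. Hence $\max_{i\ne j}|\hat P_{ij}|\ge\frac{M}{N}\mu$ for every admissible $\phi$, whence $\max_\phi\min_{|T|=K}\lambda_{\min}(\hat P_{TT})\le\frac{M}{N}(1-\mu)$, and imposing the secrecy cap $\Delta$ yields $\min(\frac{M}{N}(1-\mu),\Delta)$ for all $K\ge2$. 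The two delicate points to check are that the lexicographic nesting genuinely forces the equal-diagonal property $\hat P_{ii}=M/N$, and that the Welch inequality is invoked in the correct real, equal-norm normalization; the two-sparse test-vector reduction is what lets the single $K=2$ coherence bound control every higher sparsity level.
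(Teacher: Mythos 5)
Your proof is correct and is essentially the argument the paper itself outsources: the paper's ``proof'' is a one-line citation to Theorems 1 and 3 of the Zahedi et al.\ reference, and those proofs proceed exactly as you do --- trace/diagonal argument for $K=1$, then a two-sparse Rayleigh-quotient test vector combined with the Welch bound for $K\ge 2$, with the secrecy constraint $\Delta$ simply capping the attainable value. You in fact supply more detail than the paper, including the role of the lexicographic nesting $\mathcal{A}_{K-1}\subseteq\mathcal{A}_1$ in forcing the equal-diagonal property $\hat{P}_{ii}=M/N$.
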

\begin{proof}
The proof is similar to Theorem $1$ and Theorem $3$ as given in~\cite{Zahedi}, thus, omitted.
\end{proof}
\begin{lemma}
The optimal measurement matrix $(\phi^*,P^*)$, for the $K$ sparse signal case $y_i=\phi P s$ is given by:
\begin{itemize}
\item For the sparsity level $K=1$, $\phi^*$ is a uniform tight frame with norm values equal to $\sqrt{M/N}$ and $P^*=\cos\theta I_{N\times N}$,
\item For the sparsity level $K\geq2$, $\phi^*$ is an equiangular tight frame with norm values equal to $\sqrt{M/N}$ and $P^*=\cos\theta I_{N\times N}$.
\end{itemize}
\end{lemma}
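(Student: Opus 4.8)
The plan is to carry out the entire argument on the Stiefel manifold---which by the preceding lemma costs no optimality---and to convert the objective into a spectral quantity of the Gram matrix of the columns of $\phi$. On $S_t(M,N)$ we have $\phi\phi^T=I$, so $\hat P=\phi^T(\phi\phi^T)^{-1}\phi=\phi^T\phi=:G$ with entries $G_{ij}=\varphi_i^T\varphi_j$, where $\varphi_1,\dots,\varphi_N$ are the columns of $\phi$; moreover $\mathrm{tr}(G)=\mathrm{tr}(\phi\phi^T)=M$, so $\sum_j\|\varphi_j\|_2^2=M$. For a unit-norm $K$-sparse $s$ supported on $T$ we get $\|\hat P s\|_2^2=s^TGs=s_T^TG_{TT}s_T$, hence the inner worst case is $\min_s\|\hat P s\|_2^2=\min_{|T|=K}\lambda_{\min}(G_{TT})$, the smallest eigenvalue of the worst principal $K\times K$ submatrix. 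The design problem (ignoring secrecy for the moment) is therefore to choose a frame $\{\varphi_j\}$ with $\sum_j\|\varphi_j\|_2^2=M$ maximizing $\min_{|T|=K}\lambda_{\min}(G_{TT})$.

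First I would dispose of $K=1$. Here $T=\{j\}$ and $\lambda_{\min}(G_{TT})=\|\varphi_j\|_2^2$, so the objective is $\min_j\|\varphi_j\|_2^2$; maximizing the smallest column norm under $\sum_j\|\varphi_j\|_2^2=M$ forces $\|\varphi_j\|_2^2=M/N$ for all $j$, which is exactly a uniform (equal-norm) tight frame and attains the value $M/N$ of Lemma~\ref{lem7}. This identifies the lexicographic set $\mathcal{A}_1$ as the uniform tight frames with column norm $\sqrt{M/N}$, the starting point for the refinement. Next, for $K\geq 2$ I would restrict to $\mathcal{A}_1$, freezing the diagonal of $G$ at $M/N$. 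For $K=2$, $\lambda_{\min}(G_{TT})=\frac{M}{N}-|g_{ij}|$ with $g_{ij}=\varphi_i^T\varphi_j$, so the objective equals $\frac{M}{N}-\max_{i\neq j}|g_{ij}|$ and maximizing it is the same as minimizing the frame coherence. The Welch bound gives $\max_{i\neq j}|g_{ij}|\geq\frac{M}{N}\mu$ with $\mu=\sqrt{\frac{N-M}{M(N-1)}}$, with equality precisely when all off-diagonal moduli are equal, i.e.\ when $\phi$ is an equiangular tight frame; this attains the bound $\frac{M}{N}(1-\mu)$ of Lemma~\ref{lem7}. The extension to $K\geq 3$ proceeds lexicographically within the equiangular class, where every $G_{TT}$ has the common form $\frac{M}{N}(I+\mu E_T)$ for a sign matrix $E_T$.

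Finally, secrecy is enforced through the pre-processing matrix $P$: choosing $P^*=\cos\theta\,I_{N\times N}$ replaces $s$ by $\cos\theta\,s$ and scales the objective, $\|\hat P(P^*s)\|_2^2=\cos^2\theta\,\|\hat P s\|_2^2$, so one takes $\theta=0$ when the unconstrained optimum already meets $\leq\Delta$ and otherwise picks $\theta$ so that $\cos^2\theta$ times that optimum equals $\Delta$, exactly as in the known-signal and low-dimensional cases.

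I expect the genuine obstacle to live in the $K\geq 3$ step. By eigenvalue interlacing $\min_{|T|=K}\lambda_{\min}(G_{TT})$ is nonincreasing in $K$, and the value $\frac{M}{N}(1+\mu\,\lambda_{\min}(E_T))$ can drop below $\frac{M}{N}(1-\mu)$ once the sign pattern $E_T$ of a larger submatrix has $\lambda_{\min}(E_T)<-1$. Thus the clean bound of Lemma~\ref{lem7} is tight only at $K=2$, and the real content for higher $K$ is to argue that the equiangular tight frame remains the lexicographic optimum---that no other refinement of $\mathcal{A}_1$ simultaneously meets the coherence bound and improves the worst-case smallest eigenvalue over larger supports---which is where the sign structure of equiangular frames, rather than mere coherence, must be controlled. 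The existence of equiangular tight frames for a given pair $(M,N)$ is itself nontrivial, so the characterization should be read as describing the optimum whenever such frames exist.
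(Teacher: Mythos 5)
Your proof is correct, and in substance it is the same argument the paper relies on: the paper's own proof of this lemma is a single sentence deferring to the definitions of uniform and equiangular tight frames and to the bounds of Lemma~\ref{lem7}, which is itself proved by citation to the Zahedi et al.\ reference, and what you write out is precisely the content being outsourced---restriction to the Stiefel manifold so that $\hat{P}=\phi^T\phi=G$, reduction of the worst case over unit-norm $K$-sparse signals to $\min_{|T|=K}\lambda_{\min}(G_{TT})$, the averaging argument forcing the column norms to satisfy $\|\varphi_j\|_2^2=M/N$ when $K=1$, the Welch-bound argument (using $\mathrm{tr}(G^2)=\mathrm{tr}(G)=M$ on the Stiefel manifold, so equality holds iff all off-diagonal moduli agree) identifying equiangular tight frames when $K=2$, and the choice of $\theta$ so that $\cos^2\theta$ times the unconstrained optimum meets $\Delta$. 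One shared convention to be aware of: like the paper (whose lemma statement writes $y_i=\phi P s$), you apply $P$ to the signal only; if the noise also passed through $P$, the scalar matrix $P^*=\cos\theta I$ would cancel out of the deflection coefficient entirely and provide no secrecy control.

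The one point of divergence is $K\geq3$, and there you are right while the paper's one-line justification overreaches. The paper asserts that the upper bounds of Lemma~\ref{lem7} ``can be reached only by these frames,'' but for $K\geq3$ the value $\frac{M}{N}(1-\mu)$ is not reached by equiangular tight frames---or by any frame: writing $G_{TT}=\frac{M}{N}(I+\mu E_T)$ with $E_T$ a zero-diagonal symmetric $\pm1$ sign matrix, any frustrated triangle (odd number of negative entries) has $\lambda_{\min}(E_T)=-2$, giving $\lambda_{\min}(G_{TT})=\frac{M}{N}(1-2\mu)$; and when $M<N$ a real equal-norm tight frame must contain a frustrated triangle, since otherwise its Gram matrix would be switching-equivalent to $\frac{M}{N}\left(I+\mu(J-I)\right)$ ($J$ the all-ones matrix), which has rank $N$ for $\mu<1$, contradicting $\mathrm{rank}(G)=M$. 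So for $K\geq3$ the equiangular tight frame is optimal only in the weaker sense actually established in the cited reference---it maximizes the coherence-based lower bound $\frac{M}{N}\left(1-(K-1)\mu\right)$---not in the sense of attaining the bound of Lemma~\ref{lem7}. Your proposal therefore proves the lemma exactly to the extent that it is provable and correctly delimits the remainder; the gap it exposes lies in the paper's proof, not in yours.
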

\begin{proof}
Proof follows from the definition of uniform (or equiangular) tight frames~\cite{Casazza} and observation that the upper bounds in the Lemma~\ref{lem7} can be reached only by these frames.
\end{proof}
Note that, without physical layer secrecy constraint (i.e., $\theta=0$), our results reduce to the ones in~\cite{Zahedi}. With physical layer secrecy constraint, similar to previous cases, $\theta$ serves as a tuning parameter to guarantee an arbitrary level of secrecy.
Also, it is shown that a real equiangular
tight frame can exist only if $N \leq M(M+1)/2$, and a complex equiangular tight frame requires $N \leq M^2$~\cite{etf}. When $M$ and $N$ do not satisfy this condition, the bound in Lemma~\ref{lem7} can not be achieved and one can employ a heuristic or algorithmic approach~\cite{bai}.
\section{Discussion and Future Work}
We considered the problem of designing measurement matrices for high
dimensional signal detection based on compressed
measurements with physical layer secrecy guarantees. 
It was shown that the optimal design depends on the nature of the signal to be detected. Further, security performance of the system can be improved by using optimized measurement matrices
along with artificial noise injection based techniques. In the future, we plan to come up with efficient algorithms to improve the worst-case detection probability for the cases where equiangular tight frames do not exist.
\bibliographystyle{IEEEtran}
\bibliography{Conf,Book,Journal}
\end{document}